\newif\ifFull
\tikzset{
  on each segment/.style={
    decorate,
    decoration={
      show path construction,
      moveto code={},
      lineto code={
        \path [#1]
        (\tikzinputsegmentfirst) -- (\tikzinputsegmentlast);
      },
      curveto code={
        \path [#1] (\tikzinputsegmentfirst)
        .. controls
        (\tikzinputsegmentsupporta) and (\tikzinputsegmentsupportb)
        ..
        (\tikzinputsegmentlast);
      },
      closepath code={
        \path [#1]
        (\tikzinputsegmentfirst) -- (\tikzinputsegmentlast);
      },
    },
  },
  mid arrow/.style={postaction={decorate,decoration={
        markings,
        mark=at position .5 with {\arrow[#1]{angle 60}}
      }}},
}
\begin{document}

\title{Streamed Graph Drawing and\\ the File Maintenance Problem}

\author{Michael T. Goodrich
\and Pawe\l{} Pszona}

\institute{
Dept.~of Computer Science, University of California, Irvine
}
\date{}

\maketitle
\ifFull\else
\pagestyle{plain}
\fi

\begin{abstract}
In \emph{streamed graph drawing}, a planar graph, $G$, is given incrementally
as a data stream 
and a straight-line drawing of $G$ must be updated after each new edge is released.
To preserve the mental map, changes to the drawing
should be minimized after each update, and
Binucci~{\it et al.}~show that exponential area is necessary and
sufficient for a number of streamed graph drawings
for trees if edges are not allowed to move at all.
We show that a number of streamed graph drawings can, in fact,
be done with polynomial area,
including planar streamed graph drawings
of trees, tree-maps, and outerplanar graphs,
if we allow for a small
number of coordinate movements after each update.
Our algorithms involve an interesting connection to a classic algorithmic
problem---the \emph{file maintenance problem}---and we also give new algorithms
for this problem in a framework where bulk memory moves are allowed.
\end{abstract}

\section{Introduction}

In the \emph{streamed graph drawing} framework, which was introduced by
Binucci~{\it et al.}~\cite{DBLP:conf/gd/BinucciBBDGPPSZ09,Binucci2012418},
a graph, $G$, is incrementally
presented as a data stream of its vertices and edges, 
and a drawing of $G$ must be updated after each new 
edge is released.
So as to preserve 
the \emph{mental map}~\cite{Eades1991,Misue1995183} of the drawing,
this framework also requires that changes to the drawing of $G$ 
should be minimized after each update.
Indeed, to achieve this goal,
Binucci~{\it et al.} took the extreme position of requiring that
once an edge is drawn no changes can be made to that edge.
They showed that, under this restriction, exponential area is necessary 
and sufficient for planar drawings of trees under various orderings for how
the vertices and edges of the trees are presented.

In light of recent results regarding the mental map~\cite{MentalMap},
however, we now know that moving a small number of vertices
or edges in a drawing of a graph 
does not significantly affect readability in a negative way. Therefore,
in this paper,
we choose to relax the requirement that there are no changes to the drawing of
the graph after an update 
and instead allow a small number of coordinate movements
after each such update.
In this paper, we study planar streamed graph drawing schemes for trees,
tree-maps, and outerplanar graphs, showing that polynomial area is achievable
for such streamed graph drawings if small changes to the drawings are allowed
after each update.
Our results are based primarily on an
interesting connection between streamed graph drawing and a classic algorithmic
problem, the \emph{file maintenance problem}.

In the \emph{file maintenance problem}~\cite{DBLP:conf/sigmod/Willard86},
we wish to maintain an ordered
set, $S$, of $n$ elements, such that each element,
$x$ in $S$, is assigned a unique integer label, $L(x)$, in the range $[0,N]$,
where $x$ comes before $y$ if and only if $L(x)<L(y)$.
In the classic version of this problem, $N$ is restricted to be $O(n)$, with the
motivation that the integer labels are addresses or pseudo-addresses 
for memory locations where the elements of $S$ are stored\footnote{For instance,
	in the EDT text editor developed for the DEC PDP-11 series 
        of minicomputers,
	each line was assigned a pseudo line number, 1.0000, 2.0000, 
        and so on, and
	if a new line was to be introduced between two existing lines, $x$ and $y$,
	it was given as a default label
        the average of the labels of $x$ and $y$ as its label.}.
If $N$ is only restricted to be polynomial
in $n$, then this is known as the 
\emph{online list labeling 
problem}~\cite{DBLP:conf/esa/BenderCDFZ02,Dietz:1987:TAM:28395.28434}.
In either case, the set, $S$, can be updated by issuing a command,
insertAfter$(x,y)$, where $y$ is to be inserted to be immediately after $x\in S$ in
the ordering,
or
insertBefore$(x,y)$, where $y$ is to be inserted to be immediately 
before $x\in S$ in the ordering.
The goal is to minimize the number of elements in $S$ needing to be relabeled as a
result of such an update.

\paragraph{\textbf{Previous Related Results}.}
For the file maintenance problem, 
Willard~\cite{DBLP:conf/sigmod/Willard86}
gave a rather complicated
 solution that achieves $O(\log^2n)$ relabelings in the worst case 
after each insertion, and this result was later
simplified by Bender~{\it et al.}~\cite{DBLP:conf/esa/BenderCDFZ02}.
For the online list labeling problem,
Dietz and Sleator~\cite{Dietz:1987:TAM:28395.28434} give
an algorithm that achieves $O(\log n)$ amortized relabelings per insertion,
and $O(\log^2 n)$ in the worst-case,
using an $O(n^2)$ tag range. Their solution was simplified by
Bender~{\it et al.}~\cite{DBLP:conf/esa/BenderCDFZ02} with the same bounds.
Recently,
Kopelowitz~\cite{DBLP:conf/focs/Kopelowitz12} has given an 
algorithm that achieves $O(\log n)$ worst-case
relabelings after each insertion, using a polynomial bound for $N$.

For streamed tree drawings, as we mentioned above,
Binucci~{\it et al.}~\cite{DBLP:conf/gd/BinucciBBDGPPSZ09,Binucci2012418}
show that exponential area is required for planar drawings of trees, depending
on the order in which vertices and edges are introduced (e.g, BFS, DFS, etc.).

\paragraph{\textbf{Our Results}.}
For the context of this paper, we focus on planar drawings of graphs, 
so we consider a drawing to consist essentially of a set of non-crossing line
segments.
For traditional drawings of trees and outerplanar graphs, the endpoints of the
segments correspond to vertices and the segments represent edges. In tree-map
drawings, each vertex $v$ of a tree $T$ is represented by a rectangle,
$R$, such that the children of $v$ are represented
by rectangles inside $R$ that share portions of at least two sides of $R$.
Thus, in a tree-map drawing, the line segments correspond to the sides of
rectangles.

We present new streamed graph drawing 
algorithms for general trees, tree-maps, and outerplanar graphs
that keep the area of the drawing to be of polynomial size and allow new edges
to arrive in any order, provided the graph is connected at all times.
After each update to a graph is given, we allow a small number of,
say, a polylogarithmic number of the endpoints of the segments in 
the drawing to move to accommodate the representation of the new edge.
We alternatively
consider these to be movements of either individual endpoints or sets of 
at most $B$ endpoints, for a parameter $B$,
provided that each set of such endpoints is contained in a given convex region,
$R$, and all the endpoints in this region are translated by the same vector.
We call such operations the \emph{bulk moves}.

All of our methods are based on our
showing interesting connections between the streamed graph drawing
problems we study and the file maintenance problem.
In addition to utilizing existing algorithms
for the file maintenance problem in our graph drawing schemes,
we also give a new algorithm
for this classic problem in a framework where bulk memory moves are allowed,
and we show how this solution can also be applied to streamed graph drawing.

\section{Building Blocks}

\paragraph{\textbf{The ordered streaming model}.} 
We start with the description of the model under which we operate.
At each time $t\geq1$, a new edge, $e$,
of a graph, $G$,
arrives and has to be incorporated immediately into a drawing of $G$, 
using line segments whose endpoints are placed at grid points with 
integer coordinates.
Since we are focused on planar drawings in this paper,
together with the edge, $e$,
we also get the information of its relative position among the neighbors
of $e$'s endpoints (i.e., for every vertex we know the clockwise order
of its neighbors and $e$'s placement in this order). 
At all times, the current graph, $G$, is connected, and
the edges never disappear (infinite persistence).

Incidentally, the streaming
model of Binucci~{\it et al.}~\cite{DBLP:conf/gd/BinucciBBDGPPSZ09}
is slightly different, 
in that edges arrive without the order information in their model.
Under that model, they have given an $\Omega(2^\frac{n}{8})$ area lower bound
for binary tree drawing and an $\Omega(n(d-1)^n)$ lower bound for drawing
trees with maximum degree $d>2$. These bounds stand when the algorithm is not
allowed to move any vertex. However, they only apply to a very restricted
class of algorithms, namely \emph{predefined-location algorithms}, which are
non-adaptive algorithms whose behavior does not depend on the order
in which the edges arrive or the previously drawn edges.
Also, as noted above, Binucci~{\it et al.}~do not allow for 
vertices to move once they are placed.
As we show in the following theorem, even with the added information regarding
the relative placement of an edge among its neighbors incident
on the same  vertex, if we don't allow for vertex moves, we must allow for 
exponential area.

\begin{theorem}
\label{thm-lower}
Under the ordered streaming model without vertex moves, any tree drawing
algorithm requires $\Omega(2^{n/2})$ area in the worst case.
\label{area_lower_bound}
\end{theorem}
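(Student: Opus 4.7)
The plan is an adaptive adversary argument. I would construct an online sequence of edge insertions that forces any algorithm operating without vertex moves to produce a drawing of area $\Omega(2^{n/2})$.

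The adversary reveals a tree built around a long spine $v_0 - v_1 - \cdots - v_k$ with $k = \Theta(n)$, in order. Before revealing each spine edge $(v_i, v_{i+1})$, the adversary first reveals a constant number of auxiliary leaves attached to $v_i$, whose sole role is to constrain the cyclic order around $v_i$. In this way each ``gadget'' uses $O(1)$ new vertices, yielding $\Theta(n)$ gadgets total.

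The crucial step is when the spine edge $(v_i, v_{i+1})$ arrives. Since all of $v_i$'s other incident edges, namely $(v_{i-1}, v_i)$ together with its leaves, have already been drawn and therefore fix a cyclic order around $v_i$, the adversary adaptively specifies the cyclic position of $(v_i, v_{i+1})$ so that, combined with the planarity constraint against the already-drawn portion of the spine, the new edge is forced to leave $v_i$ inside a narrow angular wedge opening away from the current drawing. A geometric lemma then shows that the first grid point interior to such a wedge, measured from $v_i$, lies at distance $\Omega(D_i)$ from $v_i$, where $D_i$ is the current bounding diameter. Iterating this for $k = \Theta(n)$ gadgets gives $D_k = \Omega(2^{n/2})$, and hence area $\Omega(2^{n/2})$.

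The main obstacle is controlling the algorithm's freedom in placing the auxiliary leaves. Since the leaves arrive \emph{before} the adversary commits to the cyclic position of the next spine edge, the algorithm could try to spread them so as to leave wide wedges in every direction. I would defeat this by attaching enough leaves at each $v_i$ that, by pigeonhole, at least one of the wedges between consecutive placed leaves is both sufficiently narrow and oriented outward with respect to the current bounding box; the adversary then assigns $(v_i, v_{i+1})$ to precisely that wedge. Verifying this pigeonhole bound, checking that the outward-pointing wedge can always be selected consistently with a planar embedding of the full tree, and tracking the expansion constant carefully enough to obtain the $\sqrt 2$-per-vertex doubling, is where I expect the bulk of the geometric detail to lie.
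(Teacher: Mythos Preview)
Your spine-based construction has a real gap at the step you yourself flag as the ``main obstacle,'' and the pigeonhole fix you propose does not close it. With only $O(1)$ leaves attached at each spine vertex $v_i$, pigeonhole gives you a wedge between two consecutive incident edges of angular width at least some fixed constant $\theta>0$; it does not give you a \emph{narrow} wedge. Any wedge of constant aperture based at a grid point contains other grid points at distance $O(1)$, so the algorithm can always place $v_{i+1}$ right next to $v_i$. Your ``geometric lemma'' asserting that the first grid point in the wedge lies at distance $\Omega(D_i)$ is simply false for wedges of bounded-below angle, and nothing else in the setup forces the angle to shrink: once you move the apex from $v_{i-1}$ to $v_i$, the accumulated constraints at $v_{i-1}$ are gone, and you start over with a fresh $O(1)$-edge fan. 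Appealing to ``planarity against the already-drawn spine'' does not help either, since the new edge only has to avoid crossing existing segments, and the spine behind $v_i$ occupies a half-plane that still leaves a constant-angle cone free.

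The paper's argument avoids this by keeping the apex fixed: the adversarial tree is a \emph{star}, with every edge emanating from the same root $r$. Each new edge is inserted, via the ordered model, into the current narrowest wedge at $r$, splitting it in two; the adversary then recurses into the smaller half. After $n$ such insertions the surviving wedge has width $s\le 2^{-n}$ measured along a fixed bounding segment, and a similar-triangles computation shows that any grid point inside it lies at distance $\ge 2^n$ from $r$. A further $\approx\log A$ edges exhaust the few grid points that might lie inside the truncated wedge, after which the next edge is forced to length $\ge 2^n$. The essential difference is that halving happens repeatedly \emph{at one vertex}, so the angular width genuinely decays geometrically; your spine resets the aperture at every step.
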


\ifFull
\begin{proof}
We start with a single node $r$ (\emph{root}) placed in an empty grid
and imagine it surrounded by an $2\times 2$ \emph{bounding square}
(see Fig.~\ref{wedges}).
After adding at most 5 edges from the root (ordering doesn't matter),
there has to be one side $m$ of the \emph{bounding square} that is pierced
by two edges, $e$ and $f$. Assume w.l.o.g. that $m$ is the upper side
of the \emph{bounding square}.
$e$ and $f$ form an (infinite) wedge $W$, and their points of intersection
with $m$ specify $s$, a sub-interval of $m$. Abusing the notation so
that $s$ denotes the length of $s$, we have $s \leq 2$.

We keep adding new edges from the root \emph{inside} (shrinking) $W$.
Adding new edge inside $W$ (enforced by specifying edge ordering)
divides $W$ into two wedges, $W_1$ and $W_2$.
For the next iteration we select the one with the smaller interval $s$.
Thus, after adding $n+1$ edges from the root, $s \leq 1/2^n$
in the current wedge.

Now let us limit $W$ with a lowest possible horizontal line $L$
such that there are two
grid points inside $W$ that lie on $L$.
There are two triangles cut out from $W$ -- one (red) by the
\emph{bounding square},
the other (blue) by $L$. They are similar, so $A/S = a/s$. Then

\begin{displaymath}
A = \frac{aS}{s} \geq 2^naS \geq 2^n,
\end{displaymath}
as $a \geq 1$ and $S \geq 1$.

Now look at the large (blue) triangle. By definition, it can
contain at most $A$ grid points (at most one for each $y$-coordinate;
distance of $r$ from $L$ is $\leq A$).
Placing of $\log A \approx n$ additional edges allows us to obtain
a sub-wedge with no grid points inside the blue triangle (by always
picking the sub-wedge that contains less grid points).
Now every new edge has to be placed at distance
at least $A$ from the root as there are no more available grid points
left inside the triangle. Clearly, the area needed for the drawing is now also
at least $A \geq 2^n$. Since our tree has $2n$ edges, we get an
$\Omega(2^{n/2})$ area lower bound for drawing a tree of $n$ edges.
\qed
\end{proof}

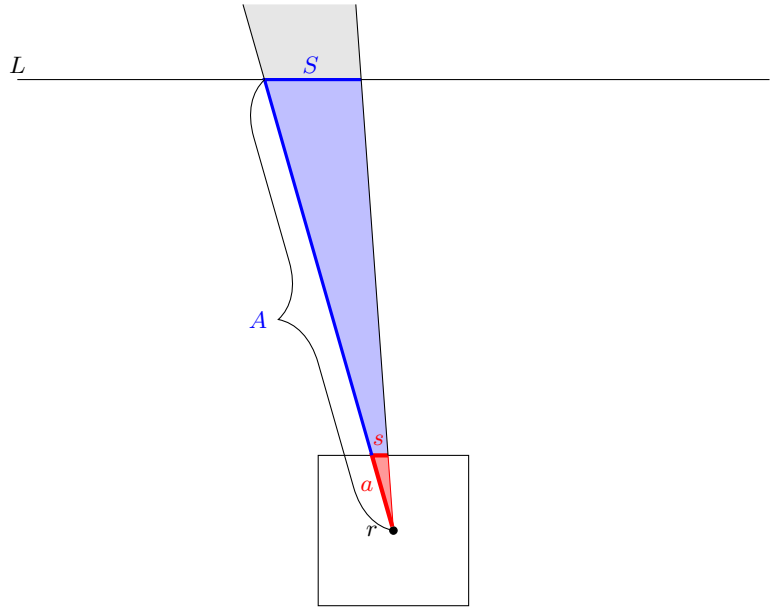
\begin{figure}[ht]
  \begin{center}
    \begin{tikzpicture}
      [
        n/.style={circle,fill=black,draw,inner sep=1pt},
        node distance=1pt
      ]

      \node[n] (r) at (0,0) {};
      \node[left=of r] {$r$};
      \coordinate (e3) at (-2,7);
      \coordinate (f3) at (-0.5,7);
      \fill[black!10] (e3) -- (r) -- (f3) -- (e3);
      \draw (-1,1) -- (1,1) -- (1,-1) -- (-1,-1) -- cycle;
      \draw (-5,6) -- (5,6);
      \path[name path=e] (r) -- (e3);
      \path[name path=f] (r) -- (f3);
      \path[name path=L] (-5,6) -- (5,6);
      \path[name path=side] (-5,1) -- (5,1);
      \path[name intersections={of=e and L, by=e2}];
      \path[name intersections={of=f and L, by=f2}];
      \path[name intersections={of=e and side, by=e1}];
      \path[name intersections={of=f and side, by=f1}];
      \fill[blue!25] (e2) -- (r) -- (f2) -- (e2);
      \fill[red!40] (e1) -- (r) -- (f1) -- (e1);
      \draw[ultra thick,red] (r) -- (e1) -- (f1);
      \draw[very thick,blue] (e1) -- (e2) -- (f2);
      \draw[decorate,decoration={brace,amplitude=20pt}] (r.center) -- (e2);
      \draw (e2) -- (e3);
      \draw[red] (r) -- (e1);
      \draw[red] (r) -- (f1);
      \draw (f1) -- (f3);
      \node[red] at (-0.35,0.6) {$a$};
      \node[red] at (-0.2,1.2) {$s$};
      \node[blue] at (-1.8, 2.8) {$A$};
      \node[blue] at (-1.1, 6.2) {$S$};
      \node at (-5, 6.2) {$L$};
    \end{tikzpicture}
  \end{center}
  \caption{Illustration of the proof of Theorem~\ref{area_lower_bound}}
  \label{wedges}
\end{figure}

\else
\begin{proof}
See Appendix.
\qed
\end{proof}
\fi

\paragraph{\textbf{File maintenance with bulk moves}.}
Here we consider the file maintenance problem and the online list labeling problem
variants where we allow for bulk moves\footnote{Note that bulk moves are also motivated
  for the original file maintenance problem if we define the complexity
  of a solution in terms of the number of commands that are 
  sent to a DMA controller for bulk memory-to-memory moves.}
of an interval of $B$ labels, for some parameter $B$.
We have already mentioned the known results for the file maintenance problem, 
where the only type of relabelings that are allowed are for individual elements,
in which $O(\log^2 n)$ worst-case 
relabelings suffice for each update when $N$ is 
$O(n)$~\cite{DBLP:conf/esa/BenderCDFZ02,DBLP:conf/sigmod/Willard86} and 
$O(\log n)$ suffice in the worst-case when $N$ is a polynomial 
in $n$~\cite{DBLP:conf/focs/Kopelowitz12}.

Bulk moves allow us to improve on these bounds.
We have achieved several tradeoffs between the operation count and the size of $B$.
Of course, if $B$ is $n$, then achieving constant number of operations is easy,
since we can maintain the $n$ elements to have the indices $1$ to $n$, and with each
insertion, at some rank $i$, we simply move the elements currently from $i$ to
$n$ up by one, as a single bulk move. Theorem~\ref{tradeoff_thm} summarizes the
rest of our results.

\begin{theorem}
We can achieve the following bounds:
\begin{enumerate}
\item
$O(1)$ worst-case relabeling bulk moves suffice for the file maintenance
problem if $B=n^{1/2}$.
\item
$O(1)$ worst-case relabeling bulk moves suffice for the online list maintenance
problem if $B=\log n$.
\item
$O(\log n)$ worst-case relabeling bulk moves suffice for the file maintenance
problem if $B=\log n$.
\end{enumerate}
\label{tradeoff_thm}
\end{theorem}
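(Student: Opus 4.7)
The plan is to handle part (1) with a fresh two-level construction and to derive parts (2) and (3) from existing worst-case relabeling algorithms by observing that their relabelings come in a small number of contiguous chunks, each of size at most $B$.

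For part (1), I would group the $n$ elements into $\sqrt{n}$ buckets of $\Theta(\sqrt{n})$ elements each, assigning each bucket a reserved contiguous subrange of the $O(n)$ label space. A generic insertion locates its owning bucket and performs a single bulk move that shifts the suffix of that bucket rightward by one unit, costing exactly one bulk move of size at most $\sqrt{n}$. When a bucket's occupancy crosses its upper threshold, it is split into two buckets of half the size, and the corresponding label-range reallocation is spread incrementally over the $\Theta(\sqrt{n})$ insertions guaranteed before another split can be triggered in the affected region, in the style of weight-balanced-B-tree deamortization, so that each insertion still performs only $O(1)$ bulk moves in the worst case. Laying out the reserved ranges with enough slack that a split never cascades to neighboring buckets is the design point that makes the constant bound go through.

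For parts (2) and (3), I would take an existing worst-case algorithm off the shelf---Kopelowitz's $O(\log n)$ scheme with polynomial $N$ for part (2), and the Bender et al.\ / Willard $O(\log^2 n)$ scheme with $N=O(n)$ for part (3)---and exploit the fact that in both cases the individually relabeled elements naturally form a small number of \emph{contiguous} windows in the tag array, because rebalancing at any level of the underlying weight-balanced tree simply re-spaces one subinterval uniformly. For part (2), the $O(\log n)$ relabelings constitute $O(1)$ contiguous windows of length $O(\log n)$, each absorbed by a single bulk move of size $B=\log n$, for a total of $O(1)$ bulk moves. For part (3), the $O(\log^2 n)$ relabelings are distributed across $O(\log n)$ levels of the tree, and at each level the touched tags form a single contiguous window of length $O(\log n)$, so one bulk move per level yields $O(\log n)$ bulk moves in total.

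The main obstacle, shared across all three parts, is establishing the structural claim that the required relabelings really do collapse into the prescribed number of contiguous bulk-move-sized windows \emph{in the worst case}, not merely amortized. For part (1) this reduces to a careful deamortization of bucket splits against a potential function tracking bucket occupancy; for parts (2) and (3) it amounts to verifying---and, if needed, slightly reshaping---the existing algorithms so that every rebalancing step writes a single contiguous interval of tags of length at most $B=\log n$, while preserving their original relabeling counts.
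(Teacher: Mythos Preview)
Your plans for parts (1) and (3) track the paper's. For (1) the paper likewise partitions into $O(\sqrt{n})$ chunks of size $O(\sqrt{n})$ and de-amortizes splits; the one difference is that it does not rely on per-bucket slack to avoid cascading---when a chunk overflows it simply shifts \emph{every} chunk to its right over by one chunk-width ($O(\sqrt{n})$ bulk moves, charged to the preceding $\Theta(\sqrt{n})$ insertions into that chunk) and then splits. Your ``enough slack so a split never cascades'' design point is hard to sustain with only $O(n)$ total labels and unboundedly many splits, so the paper's cruder shift-everything-right step is the cleaner thing to de-amortize. For (3) the paper makes exactly your observation: each of the $O(\log n)$ substeps of the Bender~\textit{et al.}\ worst-case scheme touches one contiguous $O(\log n)$-sized block, which is replaced by $O(1)$ bulk moves.

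Part (2) is where you diverge from the paper, and where the gap sits. The paper does \emph{not} treat Kopelowitz's $O(\log n)$-relabeling algorithm as a black box and then argue that its relabelings land in $O(1)$ contiguous windows. Instead it opens up Kopelowitz's \emph{two-level} construction: elements are grouped into sublists of size $O(\log n)$, each sublist is laid out as its own contiguous subarray of size $O(\log n)$, and all subarrays are concatenated into one big array. An insertion within a sublist, and each top-level operation on a whole sublist, is then simulated by a bulk move of one such subarray. Your black-box route instead requires you to certify that Kopelowitz's de-amortized rebalancing touches only $O(1)$ contiguous intervals per insertion; but his worst-case construction spreads partial rebuilding across several levels of a weight-balanced structure simultaneously, and there is no evident reason the touched labels should coalesce into $O(1)$ intervals rather than, say, $\Theta(\log n)$ of them (which would only recover the bound you already get in part (3)). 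You correctly flag this as ``the main obstacle''; the paper simply sidesteps it by working with the two-level decomposition rather than the monolithic algorithm.
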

\begin{proof}
\leavevmode
\begin{enumerate}
\item
This is accomplished in an amortized way by partitioning
the array into $n^{1/2}$ chunks of size at most $2n^{1/2}$.
Whenever a chunk, $i$, grows to have size $n^{1/2}$, we move all the chunks to
the right of $i$ 
by one chunk (using $O(n^{1/2})$ bulk moves). Then we split the chunk $i$ in
two, keeping half the items in chunk $i$ and moving half to chunk $i+1$.
These moves are charged to the previous $n^{1/2}/2$ insertions in chunk $i$.
Turning this bound into a worst-case bound is then done using standard
de-amortization techniques.
\item
This is accomplished by slightly modifying a two-level structure
of Kopelowitz~\cite{DBLP:conf/focs/Kopelowitz12}.
Kopelowitz used the top level of this structure to maintain order
of sublists of size $O(\log n)$ each. Order within each sublist was maintained
using standard file maintenance problem solutions.
Our modification is that each sublist is now represented as a small subarray
of size $O(\log n)$ and operations on the top level of Kopelowitz's structure
are simulated using bulk moves on a big array containing all concatenated subarrays.
\item
This is accomplished by using the method of 
Bender~{\it et al.}~\cite{DBLP:conf/esa/BenderCDFZ02}
and noting that each insertion in their scheme uses a process where
each substep involves moving a contiguous subarray of size $O(\log n)$ using
$O(\log n)$ individual moves. Each such move can alternatively be done using
$O(1)$ bulk moves of subarrays of size $\log n$.
\end{enumerate}
\end{proof}

\section{Streamed Graph Drawing of Trees}

In this section,
we present several algorithms for upward grid drawings of trees in the
ordered streaming model. The algorithms are designed to accommodate
different types of vertex moves allowed. 
For example, by a \emph{bulk move}, we
mean a move that translates all segment endpoints that belong to a given
convex region, $R$, by the same vector. This corresponds to the
observation~\cite{yantis1992multielement}
that moving a small number of elements in the same direction
is easy to follow and does not interfere with the ability to
understand the structure of the drawing (as long as there are no
intersections).


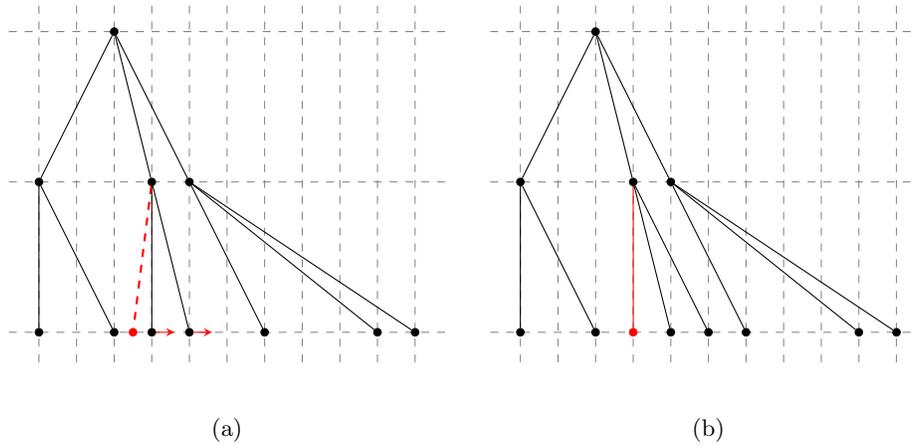
\begin{figure}[htb]
  \begin{center}
    \begin{tikzpicture}
      [
        n/.style={circle,fill=black,draw,inner sep=1pt}
      ]
      \node at (0,0) {};
      \node at (0,5.25) {};

      \draw[xstep=0.5,ystep=18,gray,very thin,dashed] (0,0.6) grid (5,5.4);
      \draw[dashed,gray,very thin] (-0.4,1) -- (5.4,1);
      \draw[dashed,gray,very thin] (-0.4,3) -- (5.4,3);
      \draw[dashed,gray,very thin] (-0.4,5) -- (5.4,5);

      \node[n] (root) at (1,5) {};
      \node[n] (a1) at (0,3) {};
      \node[n] (a2) at (1.5,3) {};
      \node[n] (a3) at (2,3) {};
      \node[n] (b1) at (0,1) {};
      \node[n] (b2) at (1,1) {};
      \node[n] (b3) at (1.5,1) {};
      \node[n] (b4) at (2,1) {};
      \node[n] (b5) at (3,1) {};
      \node[n] (b6) at (4.5,1) {};
      \node[n] (b7) at (5,1) {};

      \draw (root) -- (a1);
      \draw (root) -- (a2);
      \draw (root) -- (a3);
      \draw (a1) -- (b1);
      \draw (a1) -- (b2);
      \draw (a2) -- (b3);
      \draw (a2) -- (b4);
      \draw (a3) -- (b5);
      \draw (a3) -- (b6);
      \draw (a3) -- (b7);

      \node[n,red] (c) at (1.25,1) {};
      \draw[thick,red,dashed] (a2) -- (c);

      \draw[->,red,>=stealth] (b3) -- ++(0.3,0);
      \draw[->,red,>=stealth] (b4) -- ++(0.3,0);

      \node at (2.5,-0.3) {(a)};

      \begin{scope}[xshift=6.4cm]
        \draw[xstep=0.5,ystep=18,gray,very thin,dashed] (0,0.6) grid (5,5.4);
        \draw[dashed,gray,very thin] (-0.4,1) -- (5.4,1);
        \draw[dashed,gray,very thin] (-0.4,3) -- (5.4,3);
        \draw[dashed,gray,very thin] (-0.4,5) -- (5.4,5);

        \node[n] (root) at (1,5) {};
        \node[n] (a1) at (0,3) {};
        \node[n] (a2) at (1.5,3) {};
        \node[n] (a3) at (2,3) {};
        \node[n] (b1) at (0,1) {};
        \node[n] (b2) at (1,1) {};
        \node[n,red] (c) at (1.5,1) {};
        \node[n] (b3) at (2,1) {};
        \node[n] (b4) at (2.5,1) {};
        \node[n] (b5) at (3,1) {};
        \node[n] (b6) at (4.5,1) {};
        \node[n] (b7) at (5,1) {};

        \draw (root) -- (a1);
        \draw (root) -- (a2);
        \draw (root) -- (a3);
        \draw (a1) -- (b1);
        \draw (a1) -- (b2);
        \draw (a2) -- (b3);
        \draw (a2) -- (b4);
        \draw (a3) -- (b5);
        \draw (a3) -- (b6);
        \draw (a3) -- (b7);
        \draw[red] (a2) -- (c);

        \node at (2.5,-0.3) {(b)};
      \end{scope}
    \end{tikzpicture}
  \end{center}
  \caption{Illustrating an insertion for our tree-drawing scheme: (a) determining
    relative position for the new (dashed, red) edge; (b) tree after edge insert and
    related vertex moves.}
  \label{level_draw}
\end{figure}

Let $G$ be a tree that is revealed one edge at a time, keeping the graph connected.
Algorithm~\ref{bulk_alg} selects one endpoint of the first edge, $r$,
puts it at position $(0,0)$, and produces an upward straight-line grid drawing
of $G$,
level-by-level, with each edge from parent to child pointed downwards.
(If a new edge is ever revealed for the current root, we simply recalibrate what 
we are calling position $(0,0)$ without changing the position of 
the vertices already drawn.)
For the $k$th level, $L_k$, with $n_k$ nodes, we place nodes
in positions (0,$-k$) through ($N$,$-k$) in the order of their parents
(to avoid intersections), where $N\ge n_k$. When a new edge is added, we locate
the position (row and position in the row) of the new node and insert the new
node after its predecessor (or before its successor), shifting other nodes
on this level as needed to make room for the new node. 
(See Fig.~\ref{level_draw}.)
The details are as shown in Algorithm~\ref{bulk_alg}.

\begin{algorithm}[hbt]
  \begin{algorithmic}[1]
    \REQUIRE $e = (a,b)$, the edge to be added; $b$ is the new vertex
    \STATE $k \leftarrow b$'s distance from $r$
    \STATE determine $c$, $b$'s predecessor (or successor) in level $L_k$
    \STATE perform $L_k$.insertAfter($c,b$) (or $L_k$.insertBefore($c,b$)), giving $b$ 
           integer label $L(b)$
    \label{move_line}
    \STATE move vertices whose labels have changed in the previous step
    \STATE place $b$ at $(L(b),-k)$ and draw $e$
    \label{place_line}
  \end{algorithmic}
  \caption{Generic insertion algorithm for upward straight-line grid streamed 
           tree drawing.}
  \label{bulk_alg}
\end{algorithm}

Drawing the tree in this fashion ensures there are no intersections
(edges connect only vertices in two neighboring levels), even as the
vertices are shifted (relative order of vertices stays the same).
In addition, there are at most $O(n)$ levels, and at most $O(n)$ nodes per level.

\begin{theorem}
Depending on the implementation for the insertBefore and insertAfter methods,
Algorithm~\ref{bulk_alg} maintains a straight-line upward grid drawing
of a tree in the ordered streaming model to have the following possible performance
bounds:
\begin{enumerate}
\item
$O(n^2)$ area and $O(1)$ vertex moves per insertion 
if bulk moves of size $n^{1/2}$ are allowed.
\item
$O(n^2)$ area and $O(\log n)$ vertex moves per insertion
if bulk moves of size $\log n$ are allowed.
\item
$O(n^2)$ area and $O(\log^2 n)$ vertex moves per insertion 
if bulk moves are not allowed.
\item
polynomial area and $O(1)$ vertex moves per insertion 
if bulk moves of size $\log n$ are allowed.
\item
polynomial area and $O(\log n)$ vertex moves per insertion 
if bulk moves are not allowed.
\end{enumerate}
\end{theorem}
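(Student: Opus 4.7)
The plan is to reduce the analysis of Algorithm~\ref{bulk_alg} to that of running one instance of a labeling data structure per level and then substitute, case by case, the relevant bound from Theorem~\ref{tradeoff_thm} or the prior work quoted in the introduction. Since the algorithm places vertices at $(L(b),-k)$ and edges only connect consecutive levels $L_k$ and $L_{k+1}$, planarity follows from the fact that within each level the $x$-coordinates respect the order maintained by insertAfter/insertBefore and the children of a single parent always form a contiguous range in that order; shifting vertices left or right on a level does not introduce crossings because the relative order on the level (and hence the order of the children of each parent on the next level) is preserved. Correctness of the drawing is thus immediate from correctness of the underlying order-maintenance structure, and I would only need to spell out that the clockwise-order information given with each edge suffices to determine the predecessor/successor $c$ on $L_k$ in the required position.

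Next I would handle the area bound uniformly. On each level $L_k$ with $n_k \le n$ nodes, labels come from the range $[0,N_k]$ where $N_k$ is the range used by whichever labeling scheme is plugged in at line~\ref{move_line}. The overall drawing occupies $(\max_k N_k) \times O(n)$ grid cells, so choosing a file-maintenance solution with $N_k = O(n_k) = O(n)$ yields the $O(n^2)$ area claimed in items 1--3, while choosing an online list-labeling solution with $N_k$ polynomial in $n_k$ yields the polynomial area of items 4--5.

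The per-insertion move count is where the five cases split, and each follows by plugging in a single known bound. For item~1, I use part~1 of Theorem~\ref{tradeoff_thm}: on level $L_k$, an insertion triggers $O(1)$ bulk moves of intervals of size $n^{1/2}$ in the label space, each of which translates to a single bulk vertex move on a horizontal strip (a convex region) of the grid. For item~2, I apply part~3 of Theorem~\ref{tradeoff_thm} (the Bender~{\it et al.}\ scheme adapted to $\log n$-sized bulk moves), giving $O(\log n)$ bulk moves per insertion. For item~3 I invoke the classical $O(\log^2 n)$ worst-case bound of Willard/Bender~{\it et al.} directly. For item~4 I apply part~2 of Theorem~\ref{tradeoff_thm} to get $O(1)$ bulk moves of size $\log n$ using polynomial labels. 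For item~5 I invoke Kopelowitz's $O(\log n)$ worst-case list-labeling bound with polynomial $N$.

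The only step that needs real care, and which I consider the main obstacle, is checking that each label interval that gets moved by the underlying data structure really does correspond to a convex region of the drawing to which the definition of a bulk move applies. Because the $k$th level lives entirely on the horizontal line $y=-k$ and every relabeling operation acts on a contiguous range of labels, the moved endpoints lie inside an axis-aligned rectangle of height zero and are all translated by the same horizontal vector; such a rectangle is convex, so this matches the bulk-move definition from the section's introduction. Once this correspondence is observed, substituting the five labeling results into the common template established above yields the five stated tradeoffs. \qed
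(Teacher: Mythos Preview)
Your proposal is correct and follows the same approach as the paper, which simply states that ``the claimed bounds follow immediately from Theorem~\ref{tradeoff_thm}.'' In fact you are more careful than the paper: you correctly observe that items~3 and~5 do not come from Theorem~\ref{tradeoff_thm} at all but from the prior $O(\log^2 n)$ file-maintenance bound of Willard/Bender~{\it et al.}\ and the $O(\log n)$ list-labeling bound of Kopelowitz, respectively, and you also make explicit why a relabeled interval on a single horizontal level constitutes a valid bulk move in the drawing sense---a point the paper leaves implicit.
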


\begin{proof}
The claimed bounds follow 
immediately from Theorem~\ref{tradeoff_thm}.
\qed
\end{proof}

Note that $\Omega(n^2)$ area is necessary in the worst case for an 
upward straight-line 
grid drawing of a tree if siblings are always placed on the same level.

\section{Streamed Graph Drawing of Tree-Maps}

A tree-map, $M$, is a visualization technique introduced by 
Shneiderman~\cite{Shneiderman:1992}, which draws a rooted tree, $T$, 
as a collection of nested rectangles.
The root, $r$, of $T$ is associated with a rectangle, $R$,
and if $r$ has $k$ children, then $R$ is partitioned into $k$ sub-rectangles using 
line segments parallel to one of the coordinate axes (say, the $x$-axis), with
each such rectangle associated with one of the children of $r$.
Then, each child rectangle is recursively partitioned using line 
segments parallel to the other coordinate axis (say, the $y$-axis).
(See Fig.~\ref{tree_map}.)

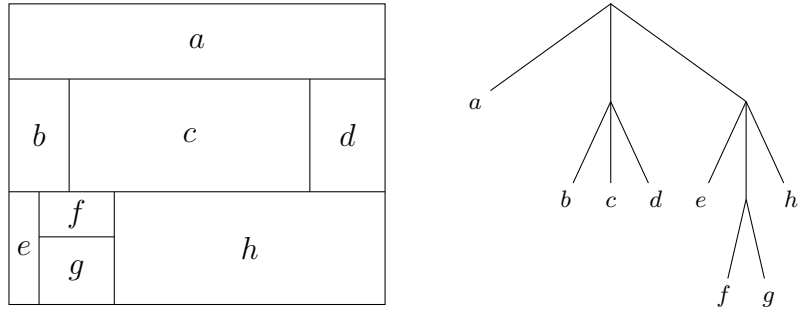
\begin{figure}[hbt!]
  \begin{center}
    \begin{tikzpicture}
      [
        n/.style={inner sep=0pt,outer sep=-0.6},
        level distance=1.3cm,
        level 1/.style={sibling distance=1.8cm,text height=0.2cm},
        level 2/.style={sibling distance=0.6cm},
      ]
      \draw (0,0) -- (0,4) -- (5,4) -- (5,0) -- cycle;

      \draw (0,3) -- (5,3);
      \node at (2.5,3.5) {\large $a$};
      \draw (0,1.5) -- (5,1.5);
      \draw (0.8,1.5) -- (0.8,3);
      \node at (0.4,2.25) {\large $b$};
      \draw (4,1.5) -- (4,3);
      \node at (2.4,2.25) {\large $c$};
      \node at (4.5,2.25) {\large $d$};
      \draw (0.4,0) -- (0.4,1.5);
      \node at (0.2,0.75) {\large $e$};
      \draw (1.4,0) -- (1.4,1.5);
      \draw (0.4,0.9) -- (1.4,0.9);
      \node at (0.9,1.2) {\large $f$};
      \node at (0.9,0.45) {\large $g$};
      \node at (3.2,0.75) {\large $h$};

      \begin{scope}[xshift=8cm]
        \node[n] at (0,4) {}
          child { node {$a$} }
          child {
            child { node {$b$} }
            child { node {$c$} }
            child { node {$d$} }
          }
          child {
            child { node {$e$} }
            child {
              child { node {$f$} }
              child { node {$g$} }
            }
            child { node {$h$} }
          }
        ;
      \end{scope}
    \end{tikzpicture}
  \end{center}
  \caption{A tree-map and its associated tree.}
  \label{tree_map}
\end{figure}

We assume in this case that a tree, $G$, 
is released one edge at a time, as in the previous section.
We assume inductively that we have a tree-map drawn for $G$, with a global set, $X$,
of all $x$-coordinates maintained for the rectangle boundaries and a global set, $Y$,
of all $y$-coordinates maintained for the rectangle boundaries.
When an edge, $e$, of a rectangle has to be moved, the largest
segment containing $e$ is moved accordingly.
Our insertion method is shown in Algorithm~\ref{tree_map_alg}
(for brevity, the case when a vertex has no predecessors among its siblings
is omitted).

\begin{algorithm}[hbt]
  \begin{algorithmic}[1]
    \REQUIRE $e = (a,b)$, the edge to be added; $b$ is a new child vertex
    \STATE Let $R$ be the rectangle for $a$, 
           and let $z$ be the primary axis for $R$ (w.l.o.g., $z=x$)
    \IF {$b$ has no siblings}
       \STATE $R_b \leftarrow R$ (and give it primary axis orthogonal to $z$)
    \ELSIF {}
    \ELSE
    \STATE determine $c$, $b$'s predecessor sibling (w.l.o.g.), and let
	$R_c$ be $c$'s rectangle
    \STATE perform $X$.insertAfter($R_c.x_{\rm max},b$), giving $b$ 
           integer label $L(b)$
    \label{move_line_tm}
    \STATE move segment endpoints whose labels have changed in the previous step
    \STATE $R_b \leftarrow$ the rectangle in $R$ with left 
           boundary $R_c.x_{\rm max}$ and right boundary $L(b)$
    \label{place_line2}
    \ENDIF
  \end{algorithmic}
  \caption{Generic insertion algorithm for streamed 
           tree-map drawing.}
  \label{tree_map_alg}
\end{algorithm}

\begin{theorem}
Depending on the implementation for the insertBefore and insertAfter methods,
Algorithm~\ref{tree_map_alg} maintains a tree-map drawing
of a tree in the ordered streaming model to 
have the following possible performance
bounds:
\begin{enumerate}
\item
$O(n^2)$ area and $O(1)$ $x$- and $y$-coordinate moves per insertion 
if bulk moves of size $n^{1/2}$ are allowed.
\item
$O(n^2)$ area and $O(\log n)$ $x$- and $y$-coordinate moves per insertion
if bulk moves of size $\log n$ are allowed.
\item
$O(n^2)$ area and $O(\log^2 n)$ $x$- and $y$-coordinate  moves per insertion 
if bulk moves are not allowed.
\item
polynomial area and $O(1)$ $x$- and $y$-coordinate moves per insertion 
if bulk moves of size $\log n$ are allowed.
\item
polynomial area and $O(\log n)$ $x$- and $y$-coordinate moves per insertion 
if bulk moves are not allowed.
\end{enumerate}
\end{theorem}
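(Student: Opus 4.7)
The plan is to reduce the analysis of Algorithm~\ref{tree_map_alg} to the bounds already established in Theorem~\ref{tradeoff_thm}, in the same spirit as the proof of the analogous theorem for tree drawing, but with some extra care needed to handle the fact that coordinates in $X$ and $Y$ are shared by many rectangles and that each endpoint move must translate an entire segment.

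First, I would argue that $|X|$ and $|Y|$ are each $O(n)$ throughout the streaming process, since each inserted edge contributes at most one new $x$-coordinate (when the primary axis of the parent is $x$) or one new $y$-coordinate (otherwise) to the global ordered sets; the base case of promoting a leaf to a rectangle with its own primary axis contributes no new coordinate, and reassigning a primary axis to a child rectangle with no siblings leaves $X,Y$ unchanged. Hence both ordered sets behave exactly like the element sequences maintained by the file maintenance / online list labeling algorithms of Theorem~\ref{tradeoff_thm}, with $n$ as the parameter, and each \textsc{insertAfter}/\textsc{insertBefore} call in Algorithm~\ref{tree_map_alg} corresponds to exactly one insertion into $X$ or $Y$.

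Next, I would show that the drawing remains a valid tree-map after each insertion. The key invariant is that every rectangle boundary used by the current drawing has its $x$-coordinate in $X$ and its $y$-coordinate in $Y$, and that a vertex of $T$ with rectangle $R$ occupies an axis-aligned box whose sides lie on coordinates consistent with the insertion order of its descendants. Because Algorithm~\ref{tree_map_alg} moves the \emph{largest segment} containing any endpoint whose label has changed, all collinear endpoints sharing a coordinate are translated together; thus the nesting relation between parent and child rectangles, and the side-sharing relation between siblings, is preserved. In the bulk-move variants, the contiguous chunks of $X$ (or $Y$) that move together correspond to a convex vertical (respectively horizontal) slab of the drawing, so each relabeling of a contiguous block of $B$ labels is realizable as a single bulk move of segment endpoints lying in a convex region, as required by the model.

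Finally, the area and movement bounds follow directly. The width of the drawing is bounded by $\max X$, the height by $\max Y$, each of which equals the label range $N$ of the underlying file maintenance structure: $O(n)$ in cases (1)--(3), giving $O(n^2)$ area, and polynomial in $n$ in cases (4)--(5). The number of $x$- or $y$-coordinate moves per insertion is precisely the number of (possibly bulk) relabelings performed on $X$ or $Y$, so each of the five cases inherits its bound from the corresponding item of Theorem~\ref{tradeoff_thm}. The main subtlety, and the step I would spend the most care on, is the second one, namely verifying that translating whole segments (rather than individual endpoints) in response to a relabeling actually realizes the intended shift of the coordinate in $X$ or $Y$ without breaking the rectangle structure; everything else is bookkeeping on top of Theorem~\ref{tradeoff_thm}.
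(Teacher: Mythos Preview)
Your proposal is correct and follows exactly the paper's approach: the paper's own proof is the single sentence ``The claimed bounds follow immediately from Theorem~\ref{tradeoff_thm},'' and everything you wrote is a (careful and accurate) unpacking of why that reduction is legitimate. In particular, your observations that $|X|,|Y|=O(n)$, that the drawing dimensions equal the label range $N$, and that a contiguous block of relabelings in $X$ (resp.\ $Y$) corresponds to a convex vertical (resp.\ horizontal) slab and hence to a single bulk move, are precisely the details the paper leaves implicit.
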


\begin{proof}
The claimed bounds follow immediately Theorem~\ref{tradeoff_thm}.
\qed
\end{proof}

\ifFull
We leave as an exercise how a similar approach could be used for 
streamed grid straight-line drawings 
of binary trees, where the $y$-coordinate of a node depends on its depth and its
$x$-coordinate depends on its inorder rank.
\fi

\section{Streamed Graph Drawing of Outerplanar Graphs}

\ifFull
\begin{figure}[h] 
  \begin{center}
    \begin{tikzpicture}
      [
        n/.style={circle,fill=black,draw,inner sep=1pt},
        node distance=1pt
      ]

      \node[n] (a) at (-5,2) {};
      \node[n] (b) at (-3.5,1.5) {};
      \node[n] (c) at (-5.5,1) {};
      \node[n] (d) at (-3,1) {};
      \node[n] (h) at (-5,0) {};
      \node[n] (g) at (-6,-1) {};
      \node[n] (f) at (-4,-0.75) {};
      \node[n] (e) at (-3,-1.5) {};

      \node[above=of a] {$a$};
      \node[right=of b] {$b$};
      \node[left=of c] {$c$};
      \node[right=of d] {$d$};
      \node[right=of e] {$e$};
      \node[above=of f] {$f$};
      \node[below=of g] {$g$};
      \node[left=of h] {$h$};

      \node at (-4.5, -2.5) {$G$};

      \draw (a) -- (b);
      \draw (a) -- (c);
      \draw (b) -- (c);
      \draw (c) -- (d);
      \draw (c) -- (h);
      \draw (d) -- (h);
      \draw (d) -- (e);
      \draw (e) -- (f);
      \draw (f) -- (g);
      \draw (g) -- (d);

      \path[name path=circ,draw,ultra thin] (1.5,0) circle (2);

      \node[n] (a1) at ($(1.5,0)!1!0:(1.5,2)$) {};
      \node[n] (h1) at ($(1.5,0)!1!45:(1.5,2)$) {};
      \node[n] (g1) at ($(1.5,0)!1!90:(1.5,2)$) {};
      \node[n] (f1) at ($(1.5,0)!1!135:(1.5,2)$) {};
      \node[n] (e1) at ($(1.5,0)!1!180:(1.5,2)$) {};
      \node[n] (d1) at ($(1.5,0)!1!225:(1.5,2)$) {};
      \node[n] (c1) at ($(1.5,0)!1!270:(1.5,2)$) {};
      \node[n] (b1) at ($(1.5,0)!1!315:(1.5,2)$) {};

      \node[above=of a1] {$a$};
      \node[above right=of b1] {$b$};
      \node[right=of c1] {$c$};
      \node[below right=of d1] {$d$};
      \node[below=of e1] {$e$};
      \node[below left=of f1] {$f$};
      \node[left=of g1] {$g$};
      \node[above left=of h1] {$h$};

      \draw (a1) -- (b1);
      \draw (a1) -- (c1);
      \draw (b1) -- (c1);
      \draw (c1) -- (d1);
      \draw (c1) -- (h1);
      \draw (d1) -- (h1);
      \draw (d1) -- (e1);
      \draw (e1) -- (f1);
      \draw (f1) -- (g1);
      \draw (g1) -- (d1);

    \end{tikzpicture}
  \end{center}
  \caption{An outerplanar graph, $G$, and its circular drawing.}
  \label{circ_drawing}
\end{figure}
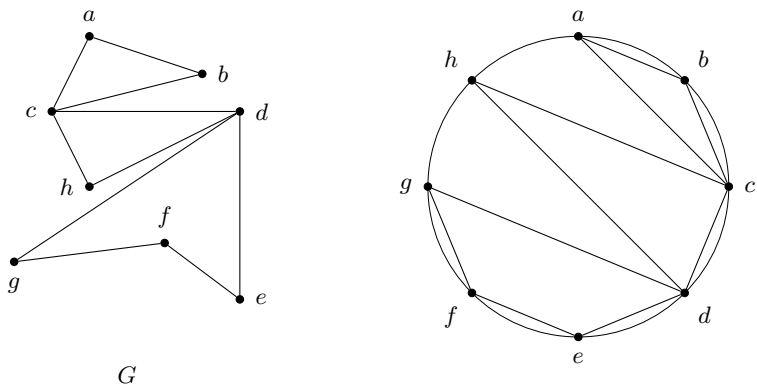 
\fi
Our algorithm for drawing outerplanar graphs in the streaming model is
based on a well-known fact about outerplanar graphs,
namely that any outerplanar graph may be drawn with straight-line edges
and without intersections in such a way that the vertices are placed on
a circle~\cite{Tamassia:2007:HGD:1202383}.
\ifFull
(See Fig.~\ref{circ_drawing}.)
\fi


As previously, we assume that each new edge comes with the information
about its relative placement among its endpoints' incident edges.
In other words, for each vertex, we know the clockwise order
of its incident edges. Fig.~\ref{1to1} shows a situation when
this information alone is not enough, however.

\begin{figure} 
  \vspace*{-12pt}
  \begin{center}
    \begin{tikzpicture}
      [
        scale=0.75,
        n/.style={circle,fill=black,draw,inner sep=1pt},
        node distance=1pt
      ]

      \node[n] (a) at (0, 3) {};
      \node[n] (b) at (0, 1.5) {};
      \node[n] (c) at (0, 0) {};
      \node[n,blue] (d) at (0.5, 1.5) {};

      \node[above=of a] {$a$};
      \node[left=of b] {$b$};
      \node[below=of c] {$c$};
      \node[above=of d] {$d$};

      \draw (a) -- (b) -- (c);

      \draw[green] (a) .. controls (-1,2) and (-1,1) .. (c);
      \draw[dashed,red] (a) .. controls (1,2) and (1,1) .. (c);
      \draw[blue] (b) -- (d);
    \end{tikzpicture}
  \end{center}
  \vspace*{-18pt}
  \caption{Situation where information about order of edges around vertex
  is insufficient. Initially, there are vertices $a$, $b$, $c$ and edges
  $(a,b)$, $(b,c)$. When a new edge $(a,c)$ is added, it can be drawn
  in two ways (solid green or dashed red) -- ordering of edges does not specify which
  one to choose. When edge $(b,d)$ arrives (with edge order $(a,d,c)$
  around $b$), if the dashed red edge location was selected, there is no way to move
  the vertices without intersections to produce an outerplanar drawing.}
  \label{1to1}
\end{figure}
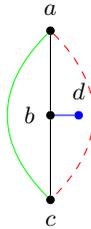 

Nevertheless, this type of problem can only happen when the new edge connects
two vertices of degree 1 as shown below.

\begin{lemma}{If at least one of the newly connected vertices has degree $>1$,
the information about relative order of incident edges suffice.}
\end{lemma}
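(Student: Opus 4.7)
The approach rests on a simple geometric correspondence: in any circular drawing of an outerplanar graph, the clockwise cyclic rotation of edges incident to a vertex $u$ agrees with the clockwise cyclic order of $u$'s neighbors along the outer circle. Granted this correspondence, the lemma reduces to a counting argument about cyclic sequences together with a straightforward geometric placement rule.

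Without loss of generality, let $u$ be the endpoint of the new edge whose pre-insertion degree is at least two, and let $n_1, n_2, \ldots, n_k$ (with $k \ge 2$) be its current neighbors in the order given by the maintained cyclic rotation at $u$. The rotation data supplied with the new edge $(u,w)$ singles out a unique pair of consecutive edges $un_i, un_{i+1}$ (indices mod $k$) between which $uw$ must fit. By the correspondence above, $w$ must lie on the open arc of the circle bounded by $n_i$ and $n_{i+1}$ that does not contain $u$; since $k \ge 2$ this arc is well defined and uniquely determined. If $w$ is a newly arriving vertex, I would insert it anywhere in that arc (for example, via an \texttt{insertAfter} call that places $w$ between the labels of $n_i$ and $n_{i+1}$ in the maintained cyclic order), and then the straight chord $uw$ lies entirely inside the region bounded by $un_i$, $un_{i+1}$, and the arc, so no crossing is introduced. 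If $w$ is already present, an inductive argument using outerplanarity of the entire underlying graph forces $w$'s existing position to already lie in the specified arc, so the chord $uw$ can be drawn directly.

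The main subtle point I expect to need to justify carefully is why $k \ge 2$ is exactly the right threshold. A cyclic sequence of only two elements is unique up to rotation, so when both endpoints of the new edge have pre-insertion degree at most one the rotation data at either endpoint contains no information disambiguating which side of the existing edge(s) the new edge lies on; this matches the ambiguity exhibited in Fig.~\ref{1to1}, where both $a$ and $c$ have a single pre-existing incident edge. Once some endpoint contributes $k+1 \ge 3$ edges to its rotation, the cyclic sequence admits multiple nonequivalent forms and the supplied data picks one, thereby pinpointing the unique arc in which the other endpoint must lie and eliminating the ambiguity.
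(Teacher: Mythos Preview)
Your route is genuinely different from the paper's, but it has a real gap.

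The paper's proof is a two-line topological argument: take the high-degree endpoint $p$, let $r$ be a neighbour of $p$ other than the one on the $p$--$q$ path, and observe that once the rotation at $p$ fixes the slot for the new edge there remain two ways to route it to $q$; the one that winds around $r$ traps $r$ inside the new cycle, contradicting outerplanarity, so only the other routing survives. The explicit appeal to outerplanarity, witnessed by $r$, \emph{is} the proof.

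Your circular-drawing argument never makes that appeal. Two concrete problems:

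\textbf{(i)} The arc selected by the rotation need not avoid $u$. If the new edge falls into the cyclic slot between $n_k$ and $n_1$, the corresponding arc is exactly the one that \emph{does} contain $u$ (this is just the correspondence you invoke: the rotation at $u$ matches the circular order read clockwise starting at $u$, so the wraparound slot maps to the arc through $u$). Your phrase ``the arc \dots\ that does not contain $u$'' is false in that case, and then the region bounded by $un_k$, $un_1$, and the arc is not convex, so your no-crossing claim for a new $w$ also breaks.

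\textbf{(ii)} For an existing $w$, you assert that ``an inductive argument using outerplanarity'' forces $w$ to already sit in the specified arc. This is exactly the content of the lemma, not a consequence of the induction hypothesis. Consider the path $c\!-\!a\!-\!u\!-\!b\!-\!d$ drawn in circular order $c,a,u,b,d$ and then receive the edge $uc$ with the (valid, outerplanar) rotation $(a,c,b)$ at $u$: the specified arc is the wraparound arc through $u$, and $c$ is not in it. What actually pins down the correct side is outerplanarity of the augmented graph---and the clean way to say \emph{why} is the paper's observation that the wrong choice would enclose some neighbour $n_j$ of $u$. Your hand-wave is hiding precisely that step.
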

\begin{proof}
Consider the situation shown in Fig.~\ref{deg2}. $(p,q)$ is
the new edge ($p$ has degree at least 2). The graph is connected,
and the path between $p$ and $q$ is shown. The initial direction
of the edge (bold part) is determined by the ordering of edges around $p$.
Then the edge can either go around $r$ (shown in dashed red) or not (solid green).
Obviously, the dashed red edge location is invalid, as it would surround $r$ with a face,
violating the requirement that each vertex belongs to the outer plane.
Therefore, there is only one possibility for correctly drawing the edge.
\qed
\end{proof}

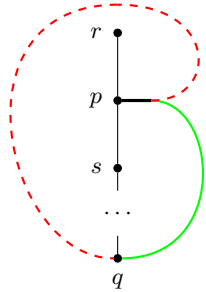
\begin{figure} 
  \vspace*{-12pt}
  \begin{center}
    \begin{tikzpicture}
      [
        scale=0.75,
        n/.style={circle,fill=black,draw,inner sep=1pt},
        node distance=1pt
      ]

      \node[n] (r) at (0,6) {};
      \node[n] (p) at (0,4.8) {};
      \node[n] (p1) at (0,3.6) {};
      \node[n] (q) at (0,2) {};

      \node at (0,2.8) {$\cdots$};

      \node[left=of r] {$r$};
      \node[left=of p] {$p$};
      \node[left=of p1] {$s$};
      \node[below=of q] {$q$};

      \draw (r) -- (p) -- (p1);
      \draw (p1) -- (0,3.2);
      \draw (q) -- (0,2.4);

      \coordinate (start) at (0.6,4.8);

      \draw[very thick](p) -- (start);

      \draw[green,thick] (start) .. controls (1.8, 4.8) and (2.0, 2) .. (q);
      \draw[dashed,red,thick] (start) .. controls (1.8, 4.8) and (1.8, 6.5) .. (0, 6.5);
      \draw[dashed,red,thick] (0,6.5) .. controls (-3,6.5) and (-2,2) .. (q);

    \end{tikzpicture}
  \end{center}
  \vspace*{-18pt}
  \caption{Of the two possibilities of 
   drawing new edge $(p,q)$, only solid green is valid.}
  \label{deg2}
\end{figure} 

It follows that when the new edge connects two vertices of degree 1, additional
information (such as relative order of the vertices on the outer face) is necessary.


We present our streamed drawing algorithm 
for an outerplanar graph, $G$, in terms of placing vertices of $G$ on a circle.
We will later derive an algorithm for drawing $G$
using grid points. 
We show in Algorithm~\ref{alg_outerplanar}
how to handle adding a new edge to the graph.

\begin{algorithm}[hbt]
  \begin{algorithmic}[1]
    \ENSURE{vertices are placed on the circle in the same order as
            they appear on the outer face of the drawing}
    \REQUIRE{outerplanar drawing of graph $G$ on a circle; $e = (p,q)$ -- edge to be added}
    \IF{$q$ is a new vertex}
      \STATE place $q$ on the circle according to ordering that includes $e$
    \ELSE
      \STATE add a \emph{virtual} arc $e'$ connecting $p$ and $q$ according
        to the specification of $e$ s.t. $e'$ does not intersect any existing edge
      \STATE calculate order $O$ of vertices on the outer plane (taking $e'$
        into account)
      \STATE move vertices into place on the circle according to $O$
    \ENDIF
    \STATE draw $e$
  \end{algorithmic}
  \caption{Adding new edge to outerplanar drawing of graph $G$}
  \label{alg_outerplanar}
\end{algorithm}

As mentioned previously, maintaining the invariant guarantees planarity
of the drawing. We now show that vertices can move into place without
causing any intersections in the process.
\begin{lemma}
Moving a vertex $v$ inside the circle along a trajectory that does not intersect
any edge non-incident to $v$ does not introduce any intersections and maintains
the
order of edges around $v$.
\label{lemma_no_intersection}
\end{lemma}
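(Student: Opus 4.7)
The plan is to leverage a simple geometric fact about chords of a disk: the line through any chord meets the closed disk in exactly that chord. This lets me rule out new crossings by a sidedness argument, and then derive the cyclic-order claim from preservation of planarity.

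For the no-new-intersection part, I would first observe that only edges incident to $v$ can move, and any two incident edges share the endpoint $v$, so they cannot cross in their interiors. Hence the only possible new crossing is between an incident edge $(v,w)$ and a non-incident edge $f=(a,b)$. If $w\in\{a,b\}$ the two edges share the endpoint $w$, so assume otherwise; then $w$, $a$, $b$ are three distinct points on the circle and $w$ lies strictly off the line $\ell_f$ through $f$. The key observation is that $\ell_f$ meets the closed disk exactly in $f$ (because $a$ and $b$ lie on the bounding circle), so any continuous path in the disk that avoids $f$ stays on a single side of $\ell_f$. Since the initial drawing is planar, the segment $(v(0),w)$ does not cross $f$, which forces $v(0)$ and $w$ to lie on the same side of $\ell_f$; by the trajectory hypothesis, $v(t)$ remains on the $w$-side of $\ell_f$ throughout the motion, so $(v(t),w)$ never crosses $\ell_f$ and hence never crosses $f$.

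This argument in fact shows that planarity is preserved at every moment of the continuous motion, not just at its endpoints, from which the cyclic-order claim follows immediately by invoking the standard fact that a continuous deformation of a straight-line drawing which remains planar preserves the combinatorial embedding, and in particular the rotation system at each vertex. The main obstacle is recognizing the chord-line observation itself; once it is in hand the rest is routine, modulo the minor boundary case $w\in\{a,b\}$ (which is a shared endpoint rather than a crossing) and the tangential-touch case, which is excluded by reading the hypothesis ``$\gamma$ does not intersect any edge non-incident to $v$'' as ``$\gamma$ and $f$ are disjoint''.
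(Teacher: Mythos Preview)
Your argument is correct. The route differs from the paper's, though the two are closely related under the hood.

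The paper removes all edges incident to $v$ and observes that the face of the remaining arrangement containing $v$ is \emph{convex} (it is the intersection of the disk with the half-planes determined by the bounding chords). Since the neighbors $w$ of $v$ lie on the boundary of this convex region, every segment $(v,w)$ stays inside the region as $v$ moves within it, which simultaneously gives non-crossing and preservation of the cyclic order (the angular order of boundary points of a convex region, as seen from an interior point, is invariant).

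Your approach is the same statement unpacked edge by edge: the chord-line observation (``$\ell_f \cap \overline{D} = f$'') is exactly why the face is an intersection of half-planes with the disk, and your sidedness argument shows $v(t)$ stays on the correct side of each $\ell_f$, i.e., stays in that convex face. What you gain is that you never have to name the face or argue it is convex; what you lose is the one-shot handling of the cyclic-order claim, which you instead recover via the general isotopy principle that a continuous family of crossing-free straight-line drawings preserves the rotation system. That principle is correct here (a swap in the cyclic order at $v$ would force two incident edges to overlap at some instant, contradicting planarity), so your deduction is sound; the paper's convexity phrasing just makes the same conclusion more immediate.
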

\begin{proof}
Consider the drawing with edges incident to $v$ removed (marked with dashed
lines in Fig.~\ref{moving_vertex}). The face to which $v$ belongs (limited
by edges and circle boundary) is the area where $v$ can move. Because
it is convex, as $v$ moves, its incident edges never intersect the
boundaries of the area (other than at their incident vertices), and the relative
order of the edges stays the same.
\qed
\end{proof}

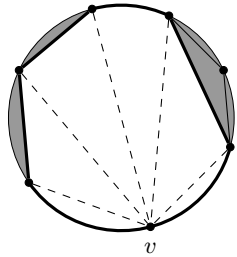
\begin{figure} 
  \begin{center}
    \begin{tikzpicture}
      [
        scale=0.75,
        n/.style={circle,fill=black,draw,inner sep=1pt},
        node distance=1pt
      ]

      \coordinate (ac) at ($(0,0)!1!15:(0,2)$);
      \coordinate (bc) at ($(0,0)!1!65:(0,2)$);
      \coordinate (cc) at ($(0,0)!1!125:(0,2)$);
      \coordinate (vc) at ($(0,0)!1!195:(0,2)$);
      \coordinate (dc) at ($(0,0)!1!255:(0,2)$);
      \coordinate (ec) at ($(0,0)!1!295:(0,2)$);
      \coordinate (fc) at ($(0,0)!1!335:(0,2)$);

      \fill[black!40,even odd rule] (0,0) circle (2)
                      (ac) -- (bc) -- (cc) arc (215:345:2) -- (fc) arc (65:105:2);

      \path[draw,name path=circ,ultra thin] (0,0) circle (2);

      \node[n] (a) at (ac) {};
      \node[n] (b) at (bc) {};
      \node[n] (c) at (cc) {};
      \node[n] (v) at (vc) {};
      \node[n] (d) at (dc) {};
      \node[n] (e) at (ec) {};
      \node[n] (f) at (fc) {};

      \node[below=of v] {$v$};

      \draw[very thick] (a) -- (b);
      \draw[very thick] (b) -- (c);
      \draw[very thick] (d) -- (f);
      \draw[very thick] (cc) arc (215:345:2);
      \draw[very thick] (fc) arc (65:105:2);
      \draw (d) -- (e);
      \draw (e) -- (f);
      \draw[dashed] (a) -- (v);
      \draw[dashed] (b) -- (v);
      \draw[dashed] (c) -- (v);
      \draw[dashed] (d) -- (v);
      \draw[dashed] (f) -- (v);
    \end{tikzpicture}
  \end{center}
  \vspace*{-12pt}
  \caption{Vertex $v$ can move in the (convex) white area without causing intersections
           or edge order changes.}
  \label{moving_vertex}
\end{figure} 

\begin{lemma}
A vertex, $v$, that moves into its new position can do so without crossing
any edges.
\end{lemma}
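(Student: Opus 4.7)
The plan is to derive this statement as a direct corollary of Lemma~\ref{lemma_no_intersection}. Let $F$ denote the face of the current drawing of $G$, with $v$'s incident edges removed, that contains $v$'s current position on the circle. The boundary of $F$ consists of chords (edges of $G$ not incident to $v$) together with arcs of the enclosing circle between consecutive boundary vertices. Because $F$ is the intersection of the disk with one half-plane per bounding chord, it is convex, so any two of its points are joined by a straight segment lying entirely in $F$. It will therefore suffice to show that $v$'s target position lies in the closure of $F$, and then apply Lemma~\ref{lemma_no_intersection} to the straight trajectory between the old and new positions.

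To locate the target inside the closure of $F$, I would analyze how the updated order $O$ is produced in Algorithm~\ref{alg_outerplanar}. The virtual arc $e'$ is drawn inside the disk without crossing any existing edge, and $O$ is read off the outer face of the resulting augmented drawing. The key structural fact I would establish is that the two vertices flanking $v$ in $O$ already lie on the boundary of $F$: for $v$ away from $p$ and $q$ they coincide with the outer-face neighbors $v$ had before, and near $p$ and $q$ they may change, but this change is mediated by $e'$, which by construction emanates from a face of $G$ that also touches $F$. Hence the target arc of the circle on which $v$ is to be placed is an arc of the circle on the boundary of $F$, so the target position itself is in $\overline{F}$.

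The main obstacle is the case $v \in \{p,q\}$, for which $v$'s set of incident edges changes with the insertion, and I would handle it by treating the virtual arc $e'$ itself as an incident edge to be removed along with the others. Since $e'$ was drawn consistent with the prescribed clockwise edge order around $p$ and $q$, removing it (together with the other edges incident to $v$) again produces a single convex region that contains both $v$'s current position and the target arc dictated by $O$. A straight trajectory inside this region, combined with Lemma~\ref{lemma_no_intersection}, then gives a move of $v$ that crosses no edges, completing the proof.
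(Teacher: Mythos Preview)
Your plan correctly isolates the convex face $F$ and invokes Lemma~\ref{lemma_no_intersection}, which is also the starting point of the paper's proof. There are, however, two genuine gaps.

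First, your justification for why $v$'s target lies in $\overline F$ does not hold. You assert that for $v$ away from $p$ and $q$ the vertices flanking $v$ in the new order $O$ ``coincide with the outer-face neighbors $v$ had before.'' In the paper's own example (Fig.~\ref{adding_new_edge}) this is false: vertex $c$ had outer-face neighbors $b$ and $d$ before the insertion, yet its neighbors in $O$ are $i$ and $a$. Even when the flanking vertices do happen to lie on $\partial F$, that alone does not place the \emph{target point} on $\partial F$, since the circular arc between them need not be contained in $\partial F$. The paper instead argues via outerplanarity of the augmented drawing: because adding the virtual arc $e'$ keeps the drawing outerplanar, some arc $C'$ of the circle on $\partial F$ must still lie on the outer face, and $v$'s destination is on $C'$.

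Second, and more seriously, you treat each moving vertex in isolation. Lemma~\ref{lemma_no_intersection} presumes the rest of the drawing is static, so it says nothing about edges joining two vertices that are \emph{both} in motion. In the paper's example, $d$, $e$, $f$ all migrate to the same arc between $h$ and $i$, and the edges $de$, $ef$ move with them. The paper handles this with a separate structural argument you omit: the vertices sharing a destination arc necessarily form a path whose interior vertices have degree~$2$, their cyclic order reverses, and hence their straight-line trajectories are pairwise disjoint while the path edges remain non-crossing throughout the motion. Without this piece, a per-vertex application of Lemma~\ref{lemma_no_intersection} does not establish that the reconfiguration in line~6 of Algorithm~\ref{alg_outerplanar} is crossing-free. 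Incidentally, the case $v\in\{p,q\}$ that you single out as the main obstacle is not the crux here---in the paper's setting those endpoints do not move.
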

\begin{proof}
Consider the face (in the sense of Lemma~\ref{lemma_no_intersection}
and Fig.~\ref{moving_vertex}) of the drawing that $v$ belongs to.
The drawing is still outerplanar after adding the \emph{virtual} arc
(which is not necessarily straight-line), and therefore at least part of the
circle, $C'$, that forms this face's border still belongs to the outer face
of the drawing (see vertex $c$ in Fig.~\ref{adding_new_edge}).
By Lemma~\ref{lemma_no_intersection}, $v$ can move to $C'$
without crossing any edges.
When there are more vertices whose destination is the same part
of the circle, $C''$, (vertices $d$, $e$ and $f$ in Fig.~\ref{adding_new_edge}),
they must form a path with inner vertices all having degree 2.
After adding the new edge, their order on the circle
(and hence on $C''$) is the reverse of their current order, so their (straight-line)
trajectories do not cross. Since they form a path, edges between them will
not intersect as they move into place.
\qed
\end{proof}

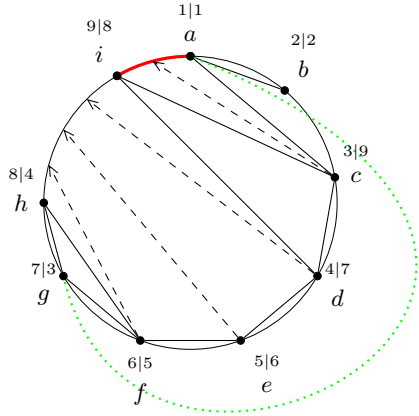
\begin{figure} 
  \vspace*{-8pt}
  \begin{center}
    \begin{tikzpicture}
      [
        scale=0.65,
        n/.style={circle,fill=black,draw,inner sep=1pt},
        node distance=1pt,
        every lower node part/.style={red}
      ]

      \useasboundingbox (-3,-4) rectangle (3,3);

      \draw[ultra thin] (0,0) circle (3);

      \node[n] (a) at ($(0,0)!1!0:(0,3)$) {};
      \node[n] (b) at ($(0,0)!1!320:(0,3)$) {};
      \node[n] (c) at ($(0,0)!1!280:(0,3)$) {};
      \node[n] (d) at ($(0,0)!1!240:(0,3)$) {};
      \node[n] (e) at ($(0,0)!1!200:(0,3)$) {};
      \node[n] (f) at ($(0,0)!1!160:(0,3)$) {};
      \node[n] (g) at ($(0,0)!1!120:(0,3)$) {};
      \node[n] (h) at ($(0,0)!1!90:(0,3)$) {};
      \node[n] (i) at ($(0,0)!1!30:(0,3)$) {};

      \draw[very thick,red] (a) arc (90:120:3);
      \node[n] at (a) {};
      \node[n] at (i) {};

      \coordinate (c_dest) at ($(0,0)!1!15:(0,3)$);
      \coordinate (d_dest) at ($(0,0)!1!45:(0,3)$);
      \coordinate (e_dest) at ($(0,0)!1!60:(0,3)$);
      \coordinate (f_dest) at ($(0,0)!1!75:(0,3)$);

      \draw (a) -- (b);
      \draw (a) -- (c);
      \draw (c) -- (d);
      \draw (d) -- (e);
      \draw (e) -- (f);
      \draw (f) -- (g);
      \draw (g) -- (h);
      \draw (h) -- (f);
      \draw (i) -- (c);
      \draw (i) -- (d);

      \draw[dotted,green,thick] (a) .. controls (11,-1.5) and (-1,-8) .. (g);

      \node[above=of a] (a1) {$a$};
      \node[above right=of b] (b1) {$b$};
      \node[right=of c] (c1) {$c$};
      \node[below right=of d] (d1) {$d$};
      \node[below right=of e] (e1) {\tiny $5|6$};
      \node[below=of f] (f1) {\tiny $6|5$};
      \node[below left=of g] (g1) {$g$};
      \node[left=of h] (h1) {$h$};
      \node[above left=of i] (i1) {$i$};

      \node[inner sep=0,above=of a1] {\tiny $1|1$};
      \node[inner sep=0,above=of b1] {\tiny $2|2$};
      \node[inner sep=0,above=of c1] {\tiny $3|9$};
      \node[inner sep=0,above=of d1] {\tiny $4|7$};
      \node[inner sep=0,below=of e1] {$e$};
      \node[inner sep=0,below=of f1] {$f$};
      \node[inner sep=0,above=of g1] {\tiny $7|3$};
      \node[inner sep=0,above=of h1] {\tiny $8|4$};
      \node[inner sep=0,above=of i1] {\tiny $9|8$};

      \draw[>=angle 45,->,dashed] (c) -- (c_dest);
      \draw[>=angle 45,->,dashed] (d) -- (d_dest);
      \draw[>=angle 45,->,dashed] (e) -- (e_dest);
      \draw[>=angle 45,->,dashed] (f) -- (f_dest);
    \end{tikzpicture}
  \end{center}
  \vspace*{-12pt}
  \caption{Adding edge $(a,g)$. \emph{Virtual} arc is drawn in green dots.
  Part of the circle that lies in the outer plane and is reachable from
  $c$ is shown in bold red.
  Numbers above vertices denote the order of vertex before and after
  the edge is added, respectively. Dashed lines are the trajectories
  of the vertices that need to move to maintain the invariant.}
  \label{adding_new_edge}
\end{figure} 

\begin{corollary}
Algorithm~\ref{alg_outerplanar} maintains an outerplanar drawing of a graph $G$
as new edges are added to it.
\label{ok_corollary}
\end{corollary}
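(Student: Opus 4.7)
The plan is to proceed by induction on the number of inserted edges, carrying the invariant stated in Algorithm~\ref{alg_outerplanar}: the vertices appear on the circle in the same cyclic order as they appear on the outer face of the current outerplanar drawing. The base case (at most two vertices) is immediate, so the work is in a single insertion of $e=(p,q)$ applied to an already correct drawing.

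If $q$ is a new vertex, the rotation information supplied with $e$, together with the invariant, pinpoints a unique gap on the circle into which $q$ must be placed; drawing $e$ as a straight chord to that new point clearly preserves both planarity and the invariant. When both endpoints already exist, the preceding lemma on vertices of degree at least two (or, in the residual degree-one/degree-one case, the extra order information supplied to the algorithm) guarantees that the virtual arc $e'$ requested by the algorithm is well-defined and may be routed through a single face of the current drawing without crossings. The planar embedding of $G+e'$ then determines the new outer-face cyclic order $O$, and Lemma~\ref{lemma_no_intersection} together with the subsequent movement lemma combine to say that every vertex whose position must change can be translated to its $O$-slot on the circle without creating any intersection along the way.

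It remains to argue that, once all vertices sit on the circle in the order $O$, drawing $e$ as a straight chord between $p$ and $q$ introduces no crossings and so preserves outerplanarity. The chord splits the circle into two arcs whose vertex sequences are, by construction of $O$ from the embedding of $G+e'$, exactly the two vertex sequences separated by $e'$ in that embedding; hence every edge of $G$ other than those incident to the chord lies strictly on one side of it, and the drawing remains outerplanar with the invariant intact. The main obstacle I anticipate is making precise this ``same side as under $e'$'' step, i.e.\ verifying that the cyclic order read off the augmented embedding really does split the vertex set about $p$ and $q$ in the way required for a straight chord to reproduce $e'$ topologically; this is where the uniqueness afforded by outerplanarity and the exclusion of the degree-one/degree-one ambiguity do the real work.
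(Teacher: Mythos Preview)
Your proposal is correct and follows exactly the line the paper intends: in the paper the corollary is stated without an explicit proof, as an immediate consequence of the algorithm's invariant together with Lemma~\ref{lemma_no_intersection} and the subsequent movement lemma, and your induction simply spells this out. The ``obstacle'' you flag is not a real gap: once the vertices lie on the circle in the outer-face cyclic order $O$ of $G+e'$, the standard fact invoked at the start of the section (any outerplanar graph drawn with its vertices on a circle in outer-face order and straight-line edges is non-crossing) already guarantees that the chord $pq$ realises $e'$ without intersections.
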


Extending Algorithm~\ref{alg_outerplanar} to placing nodes on a grid
is straightforward. Instead of a circle, we operate on a set of grid
points in convex position that are approximately circular. 
We apply one of the algorithms for the file maintenance problem
or the online list labeling problem
for maintaining order of vertices in this set. 
When such an algorithm would
move vertex $v$, we first check if there is an unused grid point between
new neighbors of $v$ on the circle. If so, we simply move the vertex
to that point. Otherwise, we ``reserve'' the destination for $v$ by inserting
a stub vertex in the correct place (between new neighbors of $v$) on the circle.
The list labeling algorithm will move vertices around the circle 
(without changing order on the circle, so it will 
not cause any intersections) to make room for this stub.
Afterwards, we move $v$ into its reserved position.

\begin{lemma}
Vertex $v$ is moved in line~6 of Algorithm~\ref{alg_outerplanar} at most $deg(v)-1$ times.
\label{lemma_one_move}
\end{lemma}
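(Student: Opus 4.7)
The plan is to prove the bound by a charging argument: each time $v$ is moved, I would identify a distinguished incident edge of $v$ to charge the move to, and show that each incident edge is charged at most once while at least one incident edge is never charged.

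First, I would analyze when $v$ moves. Vertex $v$ shifts in line~6 only when the recomputed order $O$ places $v$ at a different cyclic position than before, which happens precisely when the virtual arc $e'$ inserted in line~4 causes a flip of a contiguous portion of the outer-face walk that contains $v$. In that event, I would argue that at least one incident edge of $v$ transitions from being a \emph{boundary edge} at $v$'s appearance in the outer-face walk to being an \emph{internal chord} of a biconnected block of $v$. I would then charge this move of $v$ to the newly-internalized edge.

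Next, I would establish that an incident edge of $v$ is charged at most once, by monotonicity: once an edge $(v,u)$ is an internal chord of a biconnected block, it remains so under every subsequent edge insertion, since edges never disappear and biconnected blocks can only grow or merge (a chord of a small block remains a chord of any block that contains it). Finally, at least one incident edge of $v$ must remain on the boundary throughout, because $v$ always lies on the outer face during the algorithm, so in every biconnected block containing $v$, at least two edges of $v$ stay on the block's outer cycle at $v$'s appearance. Thus at least one incident edge of $v$ is never charged, yielding the bound of $\deg(v)-1$.

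The main obstacle lies in the first step: rigorously showing that every move of $v$ produces at least one fresh boundary-to-chord transition of an incident edge of $v$. This requires a careful case analysis of how the outer-face walk locally reorganizes after $(p,q)$ and its virtual arc $e'$ are inserted -- in particular, the subtle case where $(p,q)$ is neither incident to $v$ nor contained in a single block of $v$, but merges blocks along a path through $v$ in the block-cut tree (as happens when two biconnected components meeting at $v$ are fused by the new edge, as in Fig.~\ref{adding_new_edge}). Handling this case carefully, and tracking all appearances of an articulation vertex $v$ in the outer-face walk, is the technical heart of the proof.
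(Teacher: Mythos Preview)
Your charging scheme does not work as stated: the central claim that every move of $v$ forces some incident edge of $v$ to transition from the outer-face boundary to an internal chord is false. A concrete counterexample is the degree-$2$ vertex $e$ in Fig.~\ref{adding_new_edge}. When the edge $(a,g)$ is inserted, $e$ moves (its circular position changes from $5$ to $6$), yet both of its incident edges $(d,e)$ and $(e,f)$ lie on the outer cycle of the resulting biconnected block---neither becomes a chord. More generally, whenever the new edge merges two blocks along a path through $v$ and $v$ is an interior degree-$2$ vertex of that path, $v$ moves while both of its edges stay on the boundary. You correctly flag this block-merging case as the ``technical heart,'' but the difficulty is not merely that it is delicate to analyze: in this case the boundary-to-chord transition you want to charge to simply does not occur, so the charging argument cannot be completed along the lines you propose.

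The paper's proof tracks a different quantity: the number of distinct \emph{appearances} of $v$ on the outer-face walk. That count is at most $\deg(v)$ (one per angular wedge between consecutive incident edges), and each time the new edge shortcuts past $v$, one such appearance is destroyed; since at least one must survive by outerplanarity, at most $\deg(v)-1$ shortcuts can bypass $v$. In the example above, $e$ drops from two appearances (it was a cut vertex) to one, even though no incident edge becomes a chord---so counting outer-face appearances is the right potential, while counting chord-transitions of incident edges is not.
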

\begin{proof}
A vertex $v$ is moved when the new edge forms a \emph{shortcut} that bypasses $v$
on the outer face. $v$ can appear at most $deg(v)$ times on the outer face, so
after $deg(v) - 1$ moves, there will be only one valid position for $v$ on the outer
face, so it cannot be bypassed anymore. (See Fig.~\ref{bypass}.)
\qed
\end{proof}

\begin{figure} 
  \begin{center}
    \begin{tikzpicture}
      [
        scale=0.75,
        n/.style={circle,fill=black,draw,inner sep=1pt},
        node distance=1pt,
      ]

      \node[n] (v) at (0,0) {};
      \node[above right=of v] {$v$};

      \coordinate (a) at ($(0,0)!1!120:(0,2)$);
      \coordinate (b) at ($(0,0)!1!240:(0,2)$);

      \draw (v) -- (0,2);
      \draw (v) -- (a);
      \draw (v) -- (b);

      \coordinate (v1) at ($(v)!0.7!(a)$);
      \coordinate (v2) at ($(v)!0.9!(b)$);

      \draw[thick,green,postaction={on each segment={mid arrow}}] (v2) -- (v1);

      \coordinate (v3) at ($(v)!0.7!(0,2)$);
      \coordinate (v4) at ($(v)!0.5!(b)$);

      \draw[thick,green,postaction={on each segment={mid arrow}}] (v3) -- (v4);

      \coordinate (v5) at ($(v)!0.4!(a)$);
      \coordinate (v6) at ($(v)!0.5!(0,2)$);

      \draw[dashed,thick,red,postaction={on each segment={mid arrow}}] (v5) -- (v6);
    \end{tikzpicture}
  \end{center}
  \vspace*{-12pt}
  \caption{Vertex $v$ has degree 3. After two shortcuts (solid green lines)
    around $v$ have been added, adding a third (dashed red line) would completely
    surround $v$, violating outerplanarity. Arrows show direction of edges
    on the outer plane.}
  \label{bypass}
\end{figure}
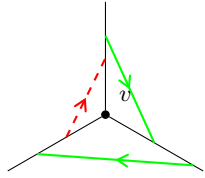 

\begin{theorem}
The grid-based version of
Algorithm~\ref{alg_outerplanar} 
maintains an outerplanar drawing of a graph $G$ and has the following
update performances:
uses $O(\log n)$
amortized moves per vertex, and 
\begin{enumerate}
\item
$O(n^3)$ area and $O(\log^2 n)$ vertex moves per edge insertion.
\item
polynomial area and $O(\log n)$ vertex moves per edge insertion.
\end{enumerate}
In addition, each of the above complexity bounds applies in an amortized
sense per vertex in the drawing.
\end{theorem}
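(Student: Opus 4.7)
The plan is to assemble three ingredients: (i) correctness of the conceptual circular algorithm, which is already given by Corollary~\ref{ok_corollary}; (ii) the charging bound of Lemma~\ref{lemma_one_move}, which controls the total number of conceptual vertex displacements; and (iii) a chosen file-maintenance or list-labeling subroutine, which converts each conceptual displacement into a bounded number of grid-level relabelings.

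First I would verify that the stub-based grid implementation preserves the invariants. Every relabeling performed by the underlying structure slides a vertex between consecutive slots along the convex polygon of grid points, a motion that by Lemma~\ref{lemma_no_intersection} stays inside the sliding vertex's current face and therefore neither introduces a crossing nor reorders edges around any vertex; once a stub has been shuffled into the slot reserved for $v$, the same lemma shows that $v$ itself can be translated into that slot through its own face. Next I would bound the total work: by Lemma~\ref{lemma_one_move} each vertex $v$ is displaced at most $\deg(v)-1$ times over the whole sequence, so the total number of conceptual displacements is at most $\sum_v (\deg(v)-1) \le 2|E(G)| = O(n)$ using the outerplanar bound on the edge count; amortized, this is $O(1)$ conceptual displacements per edge insertion and per vertex. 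Each such displacement corresponds to a single \texttt{insertAfter} call on the list-labeling structure, and plugging in Bender~\emph{et al.}'s file maintenance (tag range $N=O(n)$, $O(\log^2 n)$ worst-case relabelings) gives $O(\log^2 n)$ amortized vertex moves per edge insertion and lets us embed the $O(n)$-slot convex chain in a grid of area $O(n^3)$, while Kopelowitz's scheme (polynomial tag range, $O(\log n)$ worst-case relabelings) yields the $O(\log n)$ move bound at polynomial area. The per-vertex amortized statement then follows by combining the $O(1)$ amortized conceptual displacements per vertex with the worst-case per-operation cost of the chosen subroutine.

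The main obstacle I anticipate is the grid-realization step: organizing the $O(n)$ convex-position slots (plus stubs) on an integer grid so that the tag range exposed by the list-labeling data structure matches available spacing between consecutive slots, so that the straight-line trajectories of vertices during relabelings remain inside their respective current faces, and so that the resulting bounding box matches the claimed area. In particular, obtaining $O(n^3)$ area in case~1 appears to require some slack in the slot spacing to accommodate the $O(\log^2 n)$ simultaneous shifts per insertion, and checking that this slack can be maintained in convex position without ever forcing a vertex's trajectory through a non-incident edge is the most delicate part of the argument.
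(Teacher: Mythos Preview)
Your proposal is correct and follows essentially the same approach as the paper: invoke Corollary~\ref{ok_corollary} for correctness, use Lemma~\ref{lemma_one_move} together with $\sum_v \deg(v)=O(n)$ for outerplanar graphs to bound the total number of conceptual displacements by $O(n)$, and then charge each displacement with the per-operation cost of the chosen file-maintenance or list-labeling structure. The obstacle you flag is simpler than you fear: the $O(n)$ tag range already absorbs all the slack the file-maintenance algorithm needs, so one only has to lay out $O(n)$ grid points in strict convex position---which is known to fit in $O(n^3)$ area---and relabelings merely slide vertices between adjacent occupied slots along that convex chain, which preserves the circular order and hence (as the paper notes informally) cannot create crossings.
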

\begin{proof}
By Corollary~\ref{ok_corollary}, 
we know that the algorithm maintains an outerplanar
drawing of $G$. 
For the area bounds, the file maintenance algorithms requires $O(n)$
available integer tags 
(in this case, points in convex position) to handle $n$ elements.
Since $m$ grid points in (strict) convex position require $O(m^3)$
area, the streamed drawing algorithm therefore uses $O(n^3)$ area in such 
cases.
Likewise, it uses polynomial area when using a solution to the online list
labeling problem.

With respect to the claim about amortized performance,
by Lemma~\ref{lemma_one_move}, each vertex $v$ is moved by
Algorithm~\ref{alg_outerplanar} at most $deg(v) - 1$ times.
Each such move requires at most one insertion into the list
for the file maintenance or list maintenance algorithm. 
This means that there are at most $O(n)$ such
insertions (sum of degrees of all vertices in an outerplanar graph is $O(n)$).
For $O(n)$ insertions, the performance for each of the file maintenance or list 
maintenance algorithms therefore become an
amortized number of moves per vertex made 
by our algorithm.
\qed
\end{proof}

Note that in this application we cannot immediately
apply our results for bulk moves,
unless we restrict our attention to possible vertex points 
that are uniformly distributed on a circle and moves that involve 
rotations of intervals of points around this circle.


\section{Conclusion}
In this paper, we provide a revised approach to streamed graph drawing
based on utilizing solutions to the file maintenance problem, either 
on a level-by-level basis (for level drawings of trees), a cross-product
basis (for tree-maps), or a circular/convex-position basis 
(for outerplanar graphs).
For future work,
it would be interesting to find other applications of this problem
in streamed or dynamic graph drawing applications.

\subsection*{Acknowledgements}
We would like to thank Alex Nicolau and Alex Veidenbaum
for helpful discussions regarding the file maintenance problem.
This work was supported in part by the NSF, under grants 1011840 and 1228639.

\bibliographystyle{abbrv}
\bibliography{refs}

\ifFull\else
\clearpage
\begin{appendix}
\section{Omitted Proof}
In this appendix, we include 
a proof that was omitted in the body of this paper.

\medskip
Specifically, we give the proof of Theorem~\ref{thm-lower}.

\bigskip
\noindent
\textbf{Theorem~\ref{thm-lower}:}~~Under 
the ordered streaming model without vertex moves, any tree drawing
algorithm requires $\Omega(2^{n/2})$ area in the worst case.

\end{appendix}
\fi

\end{document}